\documentclass[12pt]{amsart}
\pagestyle{plain}
\usepackage{fullpage}
 
\usepackage{enumerate}

\usepackage{graphicx}

\usepackage{amsfonts}
\usepackage{amssymb}
\usepackage{amsmath}
\usepackage{amsthm}
\usepackage{latexsym}

\newtheorem{theorem}{Theorem}[section]
\newtheorem{lemma}[theorem]{Lemma}
\newtheorem{corollary}[theorem]{Corollary}
\newtheorem{proposition}[theorem]{Proposition}
\newtheorem{definition}[theorem]{Definition}

\theoremstyle{plain}
\theoremstyle{plain}

\theoremstyle{remark}
    \newtheorem{remark}[theorem]{Remark}

\newcommand{\EE}{\mathbb{E} }
\newcommand{\1}{\mathbf{1} }
\newcommand{\PP}{\mathbb{P} }
\newcommand{\QQ}{\mathbb{Q} }
\newcommand{\RR}{\mathbb{R} }

\newcommand{\ff}{\mathcal{F} }

\newcommand{\ignore}[1]{}

\begin{document}

\title{Polynomial term structure models}
\author{Si Cheng \and Michael R. Tehranchi
 \\
University of Cambridge }
\address{Statistical Laboratory\\
Centre for Mathematical Sciences\\
Wilberforce Road\\
Cambridge CB3 0WB\\
UK}
\email{sc591@cam.ac.uk, \ m.tehranchi@statslab.cam.ac.uk  }

\date{\today}
\thanks{\noindent\textit{Keywords and phrases:} term structure, interest rates, polynomial models}
\thanks{\textit{Mathematics Subject Classification 2010:} 91G30, 91B25, 91G80} 
 
\maketitle

\begin{abstract}
In this article, we explore a class of tractable interest rate models that have the property
that the price  of a zero-coupon bond  can be expressed as a polynomial  of a state diffusion
process.  
Our results include a classification of all such time-homogeneous single-factor
models in the spirit of Filipovic's
maximal degree theorem  for exponential polynomial models, as well as an
explicit characterisation of the set of feasible parameters in the case when
the factor process is bounded.  Extensions to time-inhomogeneous and multi-factor polynomial
models are also considered.
\end{abstract}

\section{Introduction} 

Given an integer $d \ge 1$ and a non-empty open subset  $I \subseteq \RR^d$, a $d$-factor arbitrage-free model of the risk-free interest rate term
structure can be built from  four functions, $R:I \to \RR$, $G: \RR_+ \times I \to \RR$, $b: I \to \RR^d$ and $a: I  \to \RR^{d \times d}$,
satisfying the following hypotheses:

\vskip .25cm

\noindent \textbf{Hypothesis (PDE)}.
 The function $G$ is twice-continuously differentiable and satisfies
the partial differential equation 
$$  
\partial_{\tau} G = \sum_{1 \le i \le d} b_i \ \partial_{z_i} G + \frac{1}{2} \sum_{1 \le i,j \le d}
a_{ij}  \ \partial_{z_iz_j} G - R G  \mbox{ on } \RR_+ \times I,
$$ 
 with boundary condition
$$
G(0,z )  = 1 \mbox{ for all } z \in I;
$$  

\vskip .25cm

\noindent \textbf{Hypothesis (SDE)}. There exists a function $\sigma: I \to \RR^{d \times m}$ such that
$a = \sigma \sigma^\top$ and  such that for 
all $z \in I$ the stochastic differential equation
$$
dZ_t =  b(Z_t) dt + \sigma(Z_t) dW,\ \ Z_0 = z
$$
has a non-explosive weak solution $(\Omega, \ff, \QQ; Z, W)$ such that the process $Z$ 
 takes values in $I$ and  where $W$ is an $\RR^m$-valued Brownian motion.

\vskip .25cm

Indeed, given such functions $R,G, b$ and $ a$ satisfying the above hypotheses, one
need only fix $z \in I$, 
and let $Z$ be a solution of the stochastic differential equation with $Z_0=z$, where $Z_t$ models 
the time-$t$ value of the economic factor.  The  time-$t$ spot interest rate is then modelled as 
$$
r_t = R( Z_t)
$$
and the time-$t$ price of a zero-coupon bond of maturity $T$ is modelled as
$$
P_{t,T} =G(T-t, Z_t).
$$
Note that $P_{T,T} = G(0,Z_T) = 1$ by the boundary condition  
and by It\^o's formula and the  partial differential equation   the
discounted bond prices  $e^{-\int_0^t r_s ds} P_{t,T},
$
are local martingales for all $T \ge 0$.  In particular, the  measure
$\QQ$ is a local martingale measure for the model, and hence there is no arbitrage
in the bond market.

One usually takes the functions $R$, $b$ and $a$ as given, and then solves
the  partial differential equation for $G$.  In practice, such an equation could be solved numerically.
 However, in this paper, we turn things around and assume that the function $G$ takes a specific form.  

The motivation for this study comes from the problem of calibrating the model.
Indeed, a practitioner is actually interested in a family of functions $(R_{\theta},
G_{\theta}, b_{\theta}, a_{\theta})_{\theta \in \Theta}$ where $\theta$ is
an unknown parameter or vector of parameters.  Given a collection  of observed initial
bond prices $P_{0, T}$ for various maturites $T \in \mathcal T$, one then
tries to find $\theta$ to minimise  some notion of distance   between 
the observed prices $(P_{0,T})_{T \in \mathcal T}$ and the predicted prices $(G_{\theta}(T, z) )_{T \in \mathcal T}$.  
It is generallly computationally expensive to solve the  partial differential equation  numerically
to generate the values of $G_{\theta}(T,z)$ for all, or at least a representative
sample of,  $T \in \mathcal T$ and $\theta \in \Theta$.  
Therefore, there has been continuing
interest in developing tractable models, where the function $G_{\theta}$ is of a reasonably explicit
form.

Perhaps the two most famous tractable factor models are those of Vasicek \cite{V} and Cox, Ingersoll \& Ross \cite{CIR}.
In these models the factor is scalar and identified with the spot interest rate, so in the notation above, $d=1$
and $R(z)=z$, while
the functions $b$ and $a$ are  affine  and the function $G$ is of the exponential
affine form
$$
G(\tau,z) = e^{h_0(\tau)  + h_1(\tau) z }.
$$
In the case of exponential affine models, it is well-known that the partial differential equation  reduces to a system of coupled Riccati
ordinary differential equations for the functions $h_0$ and $h_1$ 
and the boundary condition  becomes $h_0(0)=h_1(0) = 0$.
Furthermore,   the  corresponding stochastic
differential equation   always has a unique \textit{local} solution. While the
local solution to the Vasicek stochastic differential equation
 is in fact the unique global solution, the situation with the
Cox--Ingersoll--Ross stochastic differential
 is more delicate: for some values of the parameters, local solutions  
may explode in finite time by hitting the boundary of the state space. 
 Duffie \& Kan \cite{DK} studied exponential affine models where the factor process is of arbitrary dimension $d \ge 1$,
finding conditions under which 
the corresponding stochastic differential equation  has a
non-explosive solution. Subsequently, there has been 
 a considerable body of research on the properties of these exponential affine 
models.  A notable contribution to this literature
is a general characterisation of exponential affine term
structure models by Duffie,  Filipovi\'c \& Schachermayer \cite{DFS}. 

An exponential affine model can be considered a special case of the family of exponential
quadratic models.  An early example of a quadratic model was proposed by Longstaff \cite{L},
and has since been developed and generalised by Jamshidian \cite{J}, Leippold \&  Wu \cite{LW},
and  Chen,  Filipovi\'c \& Poor \cite{CFP} among others.

One may wonder if there exist non-trivial exponential cubic (or higher degree) models.
  Filipovi\'c answered this
question in the negative, by showing that the maximal degree for exponential polynomial models
is necessarily two.  That is to say, the exponential quadratic models are indeed
the most general class of exponential polynomial models.

In this article, we consider a related   class of bond pricing functions, in
 which the function $G(\tau, \cdot)$ itself is a polynomial.  We introduce
the following hypothesis:

\vskip .25cm

\noindent \textbf{Hypothesis (POLY)} There exists an
integer $n \ge 1$ such that the function $G$ is of the form
$$
G(\tau, z) = \sum_{k_1 + \ldots +k_d \le n} g_k(\tau) z^k \mbox{ for all } (\tau, z) \in \RR_+ \times I,
$$
where for $k = (k_1, \ldots , k_d) \in \mathbb{Z}^d_+$ and $z = (z_1, \ldots, z_d) \in \RR^d$,  the monomial $z^k$ is defined as 
$$
z^k = z_1^{k_1}\cdots z_d^{k_d},
$$
and where the functions $(g_k)_k$ are differentiable.

\vskip .25cm

We are now ready to define the object of our study:

\begin{definition}\label{def:poly} A polynomial term structure model is the collection of functions $R,G, b, a$ satisfying Hypotheses (PDE), (SDE)
and (POLY) along with a family of weak solutions $(\Omega, \ff, \QQ; Z, W)$ indexed by the initial point $Z_0 = z \in I$.
A polynomial model is non-degenerate if the coefficients  $(g_k)_k$ are linearly independent functions.
\end{definition}

This work is inspired by the interest rate model of Siegel \cite{S}.  He
showed that for all integers $d \ge 1$ there exist   
an explicit affine functions $R$ and explicit quadratic functions $b$, 
such that Hypothesis (PDE) is satisfied by a 
function $G$ such that $G(\tau, \cdot)$  is affine for all $\tau \ge 0$.
Note that in this case $\partial_{z_i z_j} G$ vanishes identically, and 
hence the function $a=\sigma \sigma^\top$ need not be specified to
verify the partial differential equation.  Furthermore,
 it was shown that for a certain choice of $\sigma$ 
that the corresponding  the stochastic differential equation
 has a non-explosive solution valued in the bounded state-space 
$$
I = \left\{ (z_1, \ldots, z_d) : z_i > 0 \mbox{ for all $i$ and } \sum_i z_i < 1 \right\}.
$$

We mention  also the  Brody--Hughston rational  affine  model \cite{BH}.  Working 
under the objective measure $\PP$, the state price density is modelled
$V_t = \alpha(t) + \beta(t) M_t$ where $\alpha$ and $\beta$ are deterministic
functions and $M$ is a $\PP$-martingale.  Such rational affine models have been 
extended by Akahori--Hishida--Teichmann--Tsuchiya \cite{AHTT},  Filipovi\'c-Larsson--Trolle \cite{FLT}
and Macrina \cite{Macrina} among others.  We show in 
section~\ref{se:extensions} that the Brody--Hughston model fits in our 
time-inhomogeneous polynomial framework considered here.

Just as the Brody--Hughston model and the Siegel model described above, most of the polynomial
models  of this paper (but not all -- see section \ref{se:unbounded}) have the property that the spot interest rate is \textit{bounded}.
This stands in contrast to many familiar models, such as the Vasicek and Cox--Ingersoll--Ross models.
Nevertheless, the range of the spot interest rate can be expressed easily in terms of the
model parameters, and hence the range can be calibrated to any desired (finite) width.

Finally, a related work is that of Cuchiero, Keller-Ressel \& Teichmann \cite{CKT},
who study a class of time-homogeneous Markov process $Y$  with the property that the $n$-th (mixed)
 moments can be expressed as a polynomial of the initial point $Y_0$ of degree at most $n$.
Indeed, consider the $d=1$ case and let $F_n$ be the family of polynomials of degree at most $n$:
$$
F_n = \left\{P: P(z) = \sum_{k=0}^n p_k z^k, \ p_k \in \RR \right\}.
$$
They study the processes $Y$ that have the property that
for \textit{any} degree $n$ and \textit{any} polynomial $P \in F_n$,  for all $t \ge 0$ there exists
a polynomial $Q \in F_n$ such that
\begin{align*}
\mathbb{E}[P(Y_t)|Y_0 = y] = Q(y).
\end{align*}
In contrast, in this work we study  processes $Z$ that have the property that
for a \textit{fixed} degree $n$ and a \textit{fixed} function $R$,
for all $t \ge 0$ there exists a polynomial $P =G(t,\cdot) \in F_n$ such that
$$
\EE[ e^{-\int_0^t R(Z_s) ds} | Z_0 = z] = P(z).
$$
In particular, their results
do not imply ours, or vice versa. For further existence results for multi-dimensional
polynomial preserving processes, 
consult the recent paper of  Filipovi\'c and Larsson \cite{FL}.

In the remainder of this article is arranged as follows.  In section \ref{se:pde},
we show that the analytic hypothesis that the bond price function $G$ satisfies
a certain partial differential equation and the algebraic hypothesis
that  $G$ can be expressed as a polynomial of the factor forces
the interest rate function $R$ and the coefficients of the factor dynamics $b, a$ to be low-degree polynomials
of the factor.  Furthermore, we focus on dimension $d=1$ to explicitly spell out the 
 linear constraints these two hypotheses place on the coefficients of these polynomials.  
In section \ref{se:sde}   we provide  a complete classification scalar polynomial
 models satisfying the 
 probabilistic assumption that the corresponding stochastic differential equation has
a non-explosive solution valued in a bounded interval. 
In section~\ref{se:spectral} we present a spectral 
representation of the bond prices in the context of scalar polynomial models.  
In section \ref{se:examples} we 
consider a  concrete  example of a parametrised family of 
polynomial models which generalise in a certain sense the exponential affine models.
Finally in section \ref{se:extensions}, we briefly discuss   a Hull--White-type extension where the coefficients are allowed to be time dependent.
The appendix  contains an easy-to-check formulation
of Feller's test of explosion for stochastic differential equations for stochastic
differential equations with analytic coefficients, which might have independent interest.

\section{An algebraic result} \label{se:pde} 
This section contains one of the
 main result of this paper, a classification of 
 models that satisfy the analytic Hypothesis (PDE) that the pricing function $G$ solves
a particular partial differential equation, in addition to having the extra
structural property of Hypothesis (POLY) that $G(\tau, \cdot)$ is a polynomial of fixed degree.
To more clearly see the structure of the argument we consider only the 
time-homogeneous case  in this section. The time-inhomogeneous  case is considered in section \ref{se:extensions}.
The following theorem is of a purely algebraic nature. Indeed,  we are 
waiting until the following Section \ref{se:sde} to enforce the probabilistic Hypothesis (SDE).

\begin{theorem}\label{th:main1}  Suppose the functions $R, G, b, a$ satisfy Hypotheses (PDE) and (POLY)
where the degree of $G(\tau, \cdot)$ is at most $n \ge 1$. Furthermore, suppose the coefficient functions $(g_k)_k$
are linearly independent.

\noindent Case $n=1$.   The function $R$ is a polynomial of degree at most one, for each $i$ the function $b_i$ is a polynomial
of degree at most two, and the function $a$ is unrestricted.

\noindent Case $n\ge 2$.  The function $R$ is a polynomial of degree at most two, for each $i$ the function
$b_i$ is a polynomial of degree at most three, and for each $i,j$ the function 
$a_{i,j}$ is a polynomial of degree at most four. 
\end{theorem}

\begin{remark}
In light of  Filipovi\'c's maximal degree theorem for exponential polynomial models, it might come as a
surprise that the degree $n$ is not constrained for polynomial models.  
\end{remark}

\begin{proof} Fix $n \ge 1$, and define the following set of indices 
$$
K  = \{ k \in \mathbb{Z}^d_+ : k_1 + \ldots +k_d \le n \}.
$$ 
Hypothesis (PDE) gives rise to the condition 
\begin{equation}\label{eq:md}
\sum_{k \in K } \dot{g}_k(\tau ) z^k = \sum_{k \in K_n} g_k(\tau) A_k(z) \mbox{ for all } (t,z) \in \RR_+ \times I
\end{equation}
 where, for $k \in K $, the functions $A_k$ are defined as
$$
A_k(z) = \sum_{1 \le i \le d} b_i(z) \partial_{z_i} z^k  + \frac{1}{2} \sum_{1 \le i,j \le d} a_{ij}(z)
\partial_{z_i z_j}  z^k  - R(z)z^k.
$$

As in the introduction, for $m \ge 0$ define the notation
$$
F_m = \left\{ P:I \to \RR, P(z) = \sum_{k \in K_m} p_k z^k, \quad p_k \in \RR \right\}
$$
to be the family of polynomials in $d$ variables of total degree less or equal to $m$.
Since $I$ is open but not empty, the values of the function $P\in F_m$ uniquely determines its coefficients $(p_k)_k$.

First we show that the functions $A_k \in F_n$ are polynomials for all $k \in K_n$. Let $N = { n+d \choose n}$ 
be the cardinality of index set $K$. Since the functions $(g_k)_k$ are linearly independent, 
we can find $N$ distinct times $\tau_1, \ldots, \tau_N$ independent of $z$ such that the matrix with $i$-th column formed by vector $(g_k(\tau_i) , k \in K)$ is non-singular. Now fix any $z$, we can rewrite   condition \eqref{eq:md} as a set of $N$ simultaneous linear equations with $N$ unknowns $A_k(z)$. Therefore the solution exists and is unique and can be written as linear combinations of the monomials $z^k$.
In particular,  all of the $A_k(z)$ are polynomials in $d$ variables of total degree less or equal to $n$.

In what follows, let $\{ e_1, \ldots, e_d\}$ be the standard basis of $\RR^d$, so all the $i$th component of the
vector $e_i$ is one and the other
components are zero.

\noindent \textit{Case} $n=1$.  Since we must have $A_k(z) \in F_1$ for all $k \in K =\{0, e_1, \ldots, e_d\}$, we can conclude for any $1\leq i  \leq d$
\begin{align*}
A_0(z) &= -R(z) && \in F_1 \\
A_{e_i}(z) &= b_i(z) - z_i R(z) && \in F_1 
\end{align*}
This implies $R$ is affine, and hence $b_i(z) = A_{e_i}(z) + z_i R(z)$ is quadratic for all $i$.

\noindent \textit{Case} $n \ge 2$.  
Since we must have $A_k(z) \in F_n$ for all $k \in K$, we can conclude for any $1\leq i,j \leq d$
\begin{align*}
A_0(z) &= -R(z)  & \in F_n \\
A_{e_i}(z) &= b_i(z) - z_i R(z)  & \in F_n \\
A_{e_i+e_j}(z) &= b_i(z) z_j + b_j(z) z_i + a_{ij}(z) - z_i z_j R(z)  & \in F_n
\end{align*}
Therefore we may conclude that $R \in F_n$ that $b_i = A_{e_i} + z_i R \in F_{n+1}$ and 
$a_{ij} = A_{e_i+e_j} + z_i z_j R -  b_i  z_j + b_j z_i  \in F_{n+2}$. 
In particular,  the functions $R(z), b_i, a_{ij}$ are polynomials. On the other hand since
\begin{align*}
A_{n e_i}(z) = n z_i^{n-1}b_i(z) + \frac{n(n-1)}{2} z_i^{n-2}a_{ii}(z) - z_i^n R(z) \in F_n
\end{align*}
by cancelling the $z_i^{n-2}$ factor, we may deduce that 
\begin{equation} \label{eq:md1}
nz_ib_i(z) + \frac{n(n-1)}{2} a_{ii}(z) - z_i^2 R(z) \in F_2
\end{equation}
Similarly by considering $A_{(n-1)e_i}$ and $A_{(n-2)e_i}$, we get
\begin{equation} \label{eq:md2}
(n-1)z_ib_i(z) + \frac{(n-2)(n-1)}{2} a_{ii}(z) - z_i^2 R(z) \in F_3
\end{equation}
\begin{equation} \label{eq:md3}
(n-2)z_ib_i(z) + \frac{(n-2)(n-3)}{2} a_{ii}(z) - z_i^2 R(z) \in F_4
\end{equation}
Subtracting equation \eqref{eq:md1} from equation  \eqref{eq:md2} and subtracting equation  \eqref{eq:md2} from equation \eqref{eq:md3} yields
\begin{align*}
z_ib_i(z) + (n-1) a_{ii}(z)  \in F_3 \\
z_ib_i(z) + (n-2) a_{ii}(z)  \in F_4 \\
\end{align*}
Subracting once more yields $a_{ii} \in F_4$, and hence $ b_i \in F_3$. 
Substituting this into equation \eqref{eq:md1} yields $R \in F_2$.  

Finally, considering
$A_{(n-1)e_i + e_j}$ as above yields 
$$
(n-1)z_i z_j b_i(z) + z_i^2 b_j(z) +  \frac{(n-2)(n-1)}{2} z_j a_{ii}(z) + (n-1)z_i a_{ij}(z) - z_i^2 z_j R(z) \in F_3
$$
from which the conclusion $a_{ij} \in F_4$ follows.
\end{proof} 

We now restrict attention to the scalar case to describe explicitly the constraints on the coefficients 
of the various polynomials appearing in Theorem \ref{th:main1}:

\begin{theorem}\label{th:main}  Suppose the dimension is $d=1$ and the function $G$ satisfies Hypothesis (POLY),
where the degree of $G(\tau, \cdot)$ is at most $n \ge 1$.  
Furthermore, assume  $R(z) = R_0 + R_1 z + R_2 z^2$, $b(z) = b_0+ b_1 z + b_2 z^2 + b_3 z^3$ and
$a(z) = a_0+ a_1 z + a_2 z^2 + a_3 z^3 + a_4 z^4$.

Then the functions $R, G, b, a$ satisfies Hypothesis (PDE) if 
\begin{equation}\tag{COEF}
 R_2 =  \tfrac{n}{2} b_3 = - \tfrac{n(n-1)}{2} a_4 \mbox{ and } R_1 = n b_2 + \tfrac{n(n-1)}{2} a_3.
\end{equation}
and $(g_0, \ldots, g_n)$ solves the system of linear ordinary differential equations
\begin{align}
\dot{g_k} = &  \left( (k-2) b_3    + \frac{(k-2)(k-3)}{2} a_4 -R_2 \right)g_{k-2} \nonumber \\
&  + \left( (k-1) b_2 + \frac{(k-1)(k-2)}{2} a_3 - R_1 \right)g_{k-1}  
 + \left( kb_1 + \frac{k(k-1)}{2} a_2 - R_0 \right)g_k  \tag{ODE} \\ 
& + \left( (k+1)b_0 + \frac{k(k+1)}{2} a_1 \right) g_{k+1} +  \frac{(k+2)(k+1)}{2} a_0 \ g_{k+2} , \mbox{ for } 0 \le k \le n, \nonumber \\
g_k(0) & = \left\{ \begin{array}{ll} 1 & \mbox{ if } k= 0 \\ 0 & \mbox{ if } k \ge 1 \end{array} \right. \nonumber
\end{align}
  where we interpret $g_{-2} = g_{-1} = g_{n+1} = g_{n+2} = 0$.
 
Conversely if the functions $R, G, b, a$ satisfies Hypothesis (PDE) and the functions $(g_k)_k$ are linearly
independent, then the coefficients of the polynomials $R, b, a$ satisfy equation (COEF) and the coefficients
$(g_k)_k$ of the polynomial $G$ satisfy equation (ODE).
\end{theorem} 

To better understand the statement of Theorem \ref{th:main}, we introduce some notation
that we will use in the proof as well as in the sequel.  Fix $n \ge 1$, and let 
$\mathbf{L} = (L_{i,j})_{i,j = 0}^n$ be the $(n+1) \times (n+1)$ matrix with entries
$$
L_{j+k,j} = j b_{k+1} + \tfrac{j(j-1)}{2}  a_{k+2} - R_{k}
$$
and where $R_k= b_k = a_k= 0$ when $k < 0$ and $R_k = b_{k+1} = a_{k+2} = 0$ when $k > 2$.  For instance,
when $n \ge 4$, the matrix has the form
$$
\mathbf{L} = \left( \begin{array}{clllll}
-R_0 & \quad b_0          & \quad \hskip .85cm  a_0    &                        &           				  &   \\
-R_1 & \quad b_1- R_0     & \quad 2b_0 + a_1          & \quad \hskip .85cm  3 a_0     &           				  &   \\
-R_2 & \quad b_2 -R_1     &  \quad 2b_1+ a_2 -R_0     & \quad 3b_0 + 3a_1            & \quad \hskip .85cm  6 a_0 &   \\
     & \quad b_3 -R_2     & \quad 2b_2 + a_3 - R_1    & \quad 3b_1+3a_2 -R_0         & \quad 4b_0 + 6 a_1 			& \ddots \\
     &              & \quad 2 b_3 + a_4  -R_2   & \quad 3b_2 + 3 a_3 -R_1      & \quad 4 b_1 + 6 a_2 -R_0 & \ddots \\
     &              &                     & \quad \hskip .9cm \ddots     & \quad \hskip .9cm \ddots & \ddots
		\end{array} \right).
$$
If  we defined the $\RR^{n+1}$ valued function $g = (g_0, \ldots, g_n)^\top$ then  equation (ODE)
becomes 
$$
\dot g = \mathbf{L} g, \ \ g(0) = (1, 0, \ldots, 0)^\top.
$$

For future reference, let 
$\mathbf{I}$ is the $(n+1)\times (n+1)$ identity matrix and let $\mathbf{Z}$ be the  $(n+1)\times (n+1)$
matrix defined by  
$$
 \mathbf{Z} = \left( \begin{array}{ccccc} 0 & 0 & 0 &  \\ 1 & 0 & 0 & \ddots \\ 0 & 1 & 0 & \ddots \\   & \ddots & \ddots & \ddots \end{array} \right),
$$
so that $Z_{ij} = \delta_{i,j+1}$, where $\delta$ is the Kronecker delta.
Note that if we define the operation $\hat{}: \RR^{n+1} \to F_n$ by the formula
$$
\hat p(z) =  \sum_{k=0}^n p_k z^k = (1, z, \ldots, z^n) p
$$ 
for a column vector $p = (p_0, \ldots, p_n)^\top$, then
$$
z \hat p(z) = \widehat{\mathbf{Z} p}(z) + p_n z^{n+1}.
$$
Similarly, let
$$
 \mathbf{D} =  \left( \begin{array}{ccccc} 0 & 1 & 0 &  \\ 0 & 0 & 2 & \ddots \\ 0 & 0 & 0 & \ddots \\   & \ddots & \ddots & \ddots \end{array} \right)
$$
so that  $D_{ij} = i \ \delta_{i,j-1}$  and in particular
$$
\hat p'(z) =  \widehat{\mathbf{D} p}(z).
$$
With this notation, we have the formula
$$
\mathbf{L}   = b(\mathbf{Z}) \mathbf D + \tfrac{1}{2} a(\mathbf{Z}) \mathbf{D}^2 - R(\mathbf{Z}) .
$$
Letting $\mathcal L$ be the differential operator such that
$$
\mathcal L P(z) = b(z) P'(z) + \frac{1}{2} a(z) P''(z) - R(z) P(z)
$$
for a polynomial $P$, we have
\begin{align*}
\mathcal L \hat p(z)  = & \widehat{ \mathbf{L}p} (z) +  (n b_2 + \tfrac{n(n-1)}{2} a_3 - R_1)p_n z^{n+1} \\
& + ((n-1) b_3 + \tfrac{(n-1)(n-2)}{2} a_4 - R_2)p_{n-1} z^{n+1} \\
& + (n b_3 + \tfrac{n(n-1)}{2}a_4 - R_2) p_n z^{n+2}
\end{align*}

\begin{proof}  
Let $G$ satisfy Hypothesis (POLY) so that
$$
G(\tau, z) = \widehat{ g(\tau)}(z)
$$ 
in the notation introduced above. 
Notice that
\begin{align*}
\mathcal L G - \partial_\tau G =& \widehat{ \mathbf{L} g - \dot{g}}  + (n b_3 + \tfrac{1}{2}n(n-1) a_4 - R_2) g_n z^{n+2}  \\
&+  (n b_2 + \tfrac{1}{2}n(n-1) a_3 - R_1) g_n z^{n+1}  
  +  ((n-1) b_3 + \tfrac{1}{2}(n-1)(n-2) a_4 - R_2) g_{n-1} z^{n+1}
\end{align*} 
Clearly, if the equations (COEF) and (ODE) hold, then $G$ satisfies Hypothesis (PDE). 

Conversely, suppose  $\partial_\tau G = \mathcal L G$. The left hand side of the above equation vanishes
and the term   $\widehat{ \mathbf{L} g - \dot{g}}$ is of degree at most $n$.  Hence, 
the coefficient of $z^{n+2}$  must vanish yielding
\begin{equation}\label{eq:first}
  n b_3 + \frac{n(n-1)}{2} a_4 =  R_2
\end{equation}
Similarly,  the linear independence of $g_{n-1}$ and $g_n$ implies that the coefficients of each of the two $z^{n+1}$ terms vanish,
yeilding $$
 n b_2  + \frac{n(n-1)}{2} a_3 =  R_1.
$$
and
\begin{equation}\label{eq:second}
(n-1) b_3 + \frac{(n-1)(n-2)}{2} a_4 =  R_2
\end{equation}
Note that equations \eqref{eq:first} and \eqref{eq:second} together are equivalent to equation (COEF).
Finally, we are left with $ \widehat{ \mathbf{L} g} = \hat{\dot{g}} $ identically, implying $\mathbf{L}  g = \dot g$ as
as claimed.
\end{proof}

\section{Some probabilistic results} \label{se:sde}
In this section we include some results related to the probabilistic
assumption that a certain stochastic differential equation has a non-explosive
solution.

\subsection{Bounded state space}\label{se:bounded}
We now argue that there are good reasons to make the
further assumption that the state space $I$ of the factor process $Z$ in a
 polynomial model  is   bounded, at least in the one-dimensional case.

Recall that we aim to model the price $P_{t,T}$ at time
$t$ of a zero-coupon bond of maturity $T$ by the formula $P_{t,T} =G(T-t,Z_t)$ where
$Z_t$ is the economic factor at time $t$.  Since the payout of the bond is its face 
value  $P_{T,T} = 1$, it is reasonable to assume
that the bond prices are bounded. Indeed, to avoid
a buy-and-hold arbitrage, one must have $P_{t,T} > 0$ for all $0 \le t \le T$;
furthermore, assuming the existence of a bank account continuously paying
the spot interest rate $r_t$ and assuming that this interest
rate is bounded from below in the sense that $r_t \ge - C$ for all $t \ge 0$ for some
constant $C > 0$, then there would be a buy-and-hold  arbitrage unless  $P_{t,T} \le e^{ (T-t)C}$
for all $0 \le t \le T$.

  The above discussion motivates considering the  additional hypothesis that the bond prices are 
bounded:
\begin{proposition}\label{th:bounded} Consider a non-degenerate polynomial model in dimension $d=1$.    Then the following statements are equivalent:
\begin{enumerate}
\item The function  $G(\tau, \cdot)$ is bounded on $I$ for all $\tau \ge 0$.
\item The function $R$ is bounded  on $I$.
\item The interval $I \subset \RR$ is bounded.   
\end{enumerate}
 If any one (and therefore all) of the statements holds, then 
$$
G(\tau, z) =   \EE[ e^{-\int_0^\tau R(Z_s) ds } | Z_0 = z] \mbox{ for all } (t, z) \in \RR_+ \times I.
$$
\end{proposition}

\begin{proof} (1) implies (3). By Hypothesis (POLY)  the 
function  $G(\tau, \cdot)$ is a polynomial for all $\tau \ge 0$.  Furthermore, if it were the case that $G(\tau, \cdot)$
was a constant for all $\tau \ge 0$, then we would have $g_k(\tau) = 0$ for all $k \ge 1$ and $\tau \ge 0$, 
contradicting the assumption   that the coefficients are linearly independent.  Hence,
there exists a $\tau > 0$ such that $G(\tau, \cdot)$ is non-constant.  
We are done since non-constant polynomials in 
one real variable are unbounded on unbounded intervals.

(2) implies (3).  By Theorem \ref{th:main} the function $R$ is a polynomial.  If it were the case that
$R$ was constant, then one (and therefore the only) solution to the system of linear equations (B) would be $g_0(\tau) = e^{-R_0 \tau}$
and $g_k(\tau) = 0$ for all $k \ge 1$ and $\tau \ge 0$.  Again this would contradict the assumption of linear independence
of the functions $(g_k)_k$.
Hence $R$ is a non-constant polynomial and we are done.

(3) implies both (1) and (2). This is obvious, since the functions $G(\tau, \cdot)$ and $R$ are polynomials.

Fix $z \in I$, and let $Z$ solve the stochastic differential equation 
with $Z_0=z$.  Also fix a time horizon $\tau > 0$.
As mentioned in the introduction, 
since $G$ satisfies the partial differential equation, then by It\^o's formula we know that the process
$M = (M_t)_{0 \le t \le \tau}$ defined by
$$
M_t = e^{-\int_0^t R(Z_s) ds} G(\tau-t, Z_t)
$$
is a local martingale.  By assumptions (1) and (2), the
 process $M$ is bounded by a constant.

In particular, the bounded local martingale $M$ is a true martingale by the dominated convergence theorem,
and hence
\begin{align*}
G(\tau,z)  = M_0 & = \EE[ M_\tau ] \\
& =  \EE[ e^{-\int_0^\tau R(Z_s) ds}  ]
\end{align*}
as desired.
\end{proof} 
  
\begin{remark}
Note that in higher dimensions, non-constant polynomials may be bounded on unbounded sets.
For instance, consider the polynomial 
$$
P(z_1,z_2) = z_1 - z_2
$$ on the unbounded set
$$
I = \{ (z_1,z_2):  |z_1 - z_2| \le 1 \}.
$$
\end{remark}

\begin{remark} Notice that the boundedness of the bond pricing function $G$   does \textit{not}
 imply the boundedness of the state space $I$ in the case of \textit{exponential} polynomial models such as Cox--Ingersoll--Ross.
\end{remark}
 
The following corollary of Propostion \ref{th:bounded} also serves somewhat as a converse:

\begin{corollary}\label{th:bounded2} Consider a non-degenerate polynomial model in dimension $d=1$, such that
the function $R$ is bounded from below on $I$ and $$
G(\tau, z) =   \EE[ e^{-\int_0^\tau R(Z_s) ds } | Z_0 = z] \mbox{ for all } t \ge 0, z \in I.
$$
Then the interval $I$ is bounded and the function $R$ is  bounded from above on $I$.
\end{corollary}

\begin{proof} From the formula, it is clear that the function $G$ is bounded from below by zero.
And since $R$ is bounded from below, the function $G(\tau, \cdot)$ is bounded from above. 
The conclusion follows from Proposition \ref{th:bounded}
\end{proof}

Before closing this section, we consider a consequence of Theorem \ref{th:main} in the context
of bounded scalar polynomial models:

\begin{proposition}  Consider a non-degenerate polynomial model in dimension $d=1$ where 
the interval $I$ is bounded and $G(\tau, \cdot)$ is of degree at most $n$.  
Let $P$ is a polynomial at most $n$ and let
$$
\EE[ e^{-\int_0^{\tau} R(Z_s) ds} P(Z_t) | Z_0=z ] = Q(\tau, z) \mbox{ for } \tau \ge 0, z \in I.
$$
Then for all $\tau \ge 0$, the function $Q(\tau, \cdot)$ is a polynomial of degree at most $n$.
\end{proposition}

\begin{proof} Suppose $P$ can be written as $P(z) = \sum_{k=0}^n p_k z^k$.   Let
$q$ be the unique $\RR^{n+1}$ valued solution of
$$
\dot q = \mathbf{L} q, \ \ q(0) = p.
$$
Let $Q(\tau, z) = \widehat{q(\tau)}(z)$ for $\tau \ge 0$.  Note that $Q$ solves the partial differential equation
\begin{align*}
&\partial _{\tau} Q = \mathcal L Q \mbox{ on } \RR_+ \times I \\
&Q(0, z) = P(z) \mbox{ for } z \in I.
\end{align*}
By the same argument of the proof of Proposition \ref{th:bounded} we conclude
$$
\EE[ e^{-\int_0^{\tau} R(Z_s) ds} P(Z_t) | Z_0=z ] = Q(\tau, z) \mbox{ for } \tau \ge 0, z \in I.
$$
as desired.
\end{proof}

\subsection{An unbounded example}\label{se:unbounded}
The message of Proposition \ref{th:bounded} is that
the assumptions that the bond prices are bounded 
and that the bond prices are polynomials in a scalar factor together imply 
that the spot interest rate is bounded. 

In this section, we show by example that there exists an example of a
polynomial model  where the spot rate, and therefore the bond prices,
are unbounded.  The point is not to suggest that this particular model
is a good model of real world interest rates, but rather to show that 
the hypotheses that define polynomial models do not by themselves imply
boundedness. That is to say, if one wants to a model to imply bounded bond prices, then it 
is necessary to add boundedness as an additional assumption.

Let the state space be $I = (0, \infty)$, and the degree be $n=2$,  the coefficient functions
be given by
$$
a(z) = z^2, \ b(z) = - \frac{1}{2} z^2, \ R(z) = - z,
$$
and the bond pricing function be
\begin{align*}
G(\tau,z)  = 1 + \tau z + \frac{1}{2}(e^\tau - \tau -1 ) z^2.
\end{align*}
Note that 
$$
\partial_\tau G = b \partial_z H  + \frac{1}{2}a  \partial_{zz} G - R G
$$
with initial condition 
$$
G(0,z) = 1
$$
so we are in the setting of Theorem \ref{th:main}.   Finally note
that the unique strong solution of the stochastic differential equation 
$$
dZ_t = - \frac{1}{2} Z_t^2 \ dt + Z_t \ dW_t, \ Z_0 = z
$$
is given by the formula 
$$
Z_t = \frac{   z e^{W_t - t/2} }{1 + \frac{z}{2} \int_0^t e^{W_s - s/2}ds }.
$$
Therefore, these data constitute a polynomial model according to Definition \ref{def:poly}
where the spot rate is unbounded.

Finally, notice the identity 
\begin{align*} 
\EE( e^{\int_0^\tau Z_s ds} ) & = \EE \left[ \left( 1 + \frac{z}{2} \int_0^\tau e^{W_s - s/2}ds \right)^2 \right] \\
& = G(\tau,z)
\end{align*}
which can be verified by explicit calculation. 

\begin{remark} This example does not violate Corollary \ref{th:bounded2} since in this case,
the function $R$ is bounded \textit{from above} but not from below.
\end{remark}

\begin{remark}
We mention here an interesting (though a bit tangential) observation regarding the above
example.  It is easy to see that that the process $Z$ introduced is such that
the process $Y = e^{Z}$ defines a local martingale with dynamics
$$
dY_t = Y_t \log(Y_t) \ dW_t, \ \ Y_0 = e^z.
$$ 
 It is slightly less
obvious that the process $Y$ is a strictly local martingale.
  See, for instance, the paper of   Goodman \cite{goodman} for related results.
\end{remark}

\subsection{A form of Feller's test}
As argued in Section \ref{se:bounded}, 
there are economic reasons to   consider polynomial models in which
the factor process takes values in a bounded interval.
Therefore, in this subsection we consider
solutions to the scalar stochastic differential equation
$$
dZ_t = b(Z_t) dt  + \sigma(Z_t) dW_t,
$$
which live in a bounded state space $I = (z_{\mathrm{min}}, z_{\mathrm{max}})$.
Furthermore, in light of Theorem \ref{th:main},   we assume that the coefficients 
 $b$ and $\sigma^2$ are polynomials.

To avoid trivial complications, we assume
$$
\sigma(z_{\mathrm{min}}) = 0 =\sigma( z_{\mathrm{max}}) \mbox{ and } \sigma(z) > 0 \mbox{ for }  z_{\mathrm{min}} 
< z < z_{\mathrm{max}}.
$$
Note that the coefficient $b$ is Lipschitz on the closed
interval $[z_{\mathrm{min}}, z_{\mathrm{max}}]$, while the 
coefficient $\sigma$ is Lipschitz on any interval 
$[z_{\mathrm{min}}+1/N, z_{\mathrm{max}} - 1/N]$
for $N > 1$ large enough.
Therefore, for every $z \in   (z_{\mathrm{min}}, z_{\mathrm{max}})$   the stochastic differential equation has a unique
nested family of 
strong solutions $(Z_{t,N})_{t \in [0,T_N]}$ with $Z_{0,N} = z$, where
$$
T_N = \inf \{ t>0: Z_t \notin (z_{\mathrm{min}}+1/N, z_{\mathrm{max}} - 1/N) \}.
$$
The explosion time $T$ is then defined as
$$
T = \sup_N T_N.
$$
We are interested in the case where the solution is non-explosive
in the sense that $T = \infty$ almost surely.

 The classical necessary and sufficient
conditions on the functions $b$ and $\sigma$  is Feller's test of explosion.
Motivated by Theorem \ref{th:main} we adapt Feller's test to the case
where the functions $b$ and $\sigma^2$ are polynomials.  It is likely that 
the following result is well-known, but we were unable to locate a reference
in the literature.

\begin{theorem}\label{th:feller}
Let $b$ and $a$ be real analytic. Furthermore, assume
$$
a(z_{\mathrm{min}}) = 0 =a(z_{\mathrm{max}}) \mbox{ and } a(z) > 0  \mbox{ for }  z_{\mathrm{min}} 
< z < z_{\mathrm{max}}.
$$
Letting $\sigma = \sqrt{a}$, 
there exists a unique non-explosive strong solution $Z$ of the stochastic differential equation
$$
dZ_t = b(Z_t) dt  + \sigma(Z_t) dW_t 
$$
taking values in the interval $(z_{\mathrm{min}}, z_{\mathrm{max}})$  if and only if
$$
2b(z_{\mathrm{min}}) - a'(z_{\mathrm{min}}) \ge 0 \ge 2b(z_{\mathrm{max}}) - a'(z_{\mathrm{max}}).
$$
\end{theorem}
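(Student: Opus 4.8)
The plan is to reduce the statement to Feller's classical test of explosion and then extract the sign conditions from the local polynomial structure of $a$ and $b$ at the two endpoints. Fix an interior reference point $c \in (z_{\mathrm{min}}, z_{\mathrm{max}})$ and introduce the scale density $s'(z) = \exp\big(-\int_c^z \tfrac{2b(u)}{a(u)}\,\mathrm{d}u\big)$ and the speed density $m(z) = \tfrac{2}{a(z)\,s'(z)}$. Existence and pathwise uniqueness of the solution up to the explosion time $T$ is already guaranteed by the local Lipschitz property noted just before the statement, so the entire content is the dichotomy $T = \infty$ a.s.\ versus $\PP[T < \infty] > 0$. Feller's test (for instance Karatzas--Shreve \cite{KS}, Theorem~5.5.29) identifies non-explosion with the inaccessibility of both endpoints, equivalently with $v(z_{\mathrm{min}}+) = v(z_{\mathrm{max}}-) = \infty$ for the Feller function $v(x) = \int_c^x s'(y)\int_c^y m(z)\,\mathrm{d}z\,\mathrm{d}y$; conversely, if one endpoint is accessible there is no solution remaining in the open interval for all time. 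Thus it suffices to analyse each endpoint separately.

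I would treat the right endpoint $z_{\mathrm{max}}$, handling the left endpoint by the reflection $z \mapsto -z$, which flips the sign of $b$ and of $a'$ and hence reverses the inequality at $z_{\mathrm{min}}$. Since $a$ is a polynomial vanishing at $z_{\mathrm{max}}$ and strictly positive just inside, I would factor $a(z) = (z_{\mathrm{max}} - z)^p\,h(z)$ with integer $p \ge 1$ and $h(z_{\mathrm{max}}) > 0$. Differentiating gives $a'(z_{\mathrm{max}}) = -h(z_{\mathrm{max}})$ when $p = 1$ and $a'(z_{\mathrm{max}}) = 0$ when $p \ge 2$; hence $2b(z_{\mathrm{max}}) - a'(z_{\mathrm{max}})$ equals $h(z_{\mathrm{max}})(\gamma + 1)$ with $\gamma := 2b(z_{\mathrm{max}})/h(z_{\mathrm{max}})$ in the simple-zero case, and equals $2 b(z_{\mathrm{max}})$ in the higher-order case. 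The strategy is then to read off the leading behaviour of $\tfrac{2b}{a}$ near $z_{\mathrm{max}}$, integrate to obtain the asymptotics of $s'$ and $m$, and feed these into $v(z_{\mathrm{max}}-)$.

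The case analysis runs as follows. When $p = 1$ one finds $s'(z) \sim \mathrm{const}\,(z_{\mathrm{max}} - z)^\gamma$ and $m(z) \sim \mathrm{const}\,(z_{\mathrm{max}} - z)^{-1-\gamma}$, and a direct power count shows $v(z_{\mathrm{max}}-) = \infty$ precisely when $\gamma \le -1$, i.e.\ exactly when $2b(z_{\mathrm{max}}) - a'(z_{\mathrm{max}}) \le 0$. When $p \ge 2$ and $b(z_{\mathrm{max}}) \ne 0$, the integral $\int_c^z \tfrac{2b}{a}$ diverges like $(z_{\mathrm{max}}-z)^{-(p-1)}$, so $s'$ degenerates exponentially: if $b(z_{\mathrm{max}}) < 0$ then $s' \to \infty$, $s(z_{\mathrm{max}}) = \infty$ and $m$ is integrable, forcing $v(z_{\mathrm{max}}-) = \infty$, whereas if $b(z_{\mathrm{max}}) > 0$ a Laplace estimate shows $z_{\mathrm{max}}$ is accessible; either way the dichotomy matches the sign of $2b(z_{\mathrm{max}})$. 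The remaining, and genuinely delicate, case is the boundary case $p \ge 2$ with $b(z_{\mathrm{max}}) = 0$, where $2b(z_{\mathrm{max}}) - a'(z_{\mathrm{max}}) = 0$ and the theorem predicts non-explosion.

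The hard part is this boundary case, where both $s'$ and $m$ may carry exponentially large or small factors, so that a crude leading-order estimate wrongly suggests the answer depends on higher Taylor coefficients of $b$; the resolution is a precise cancellation. Writing $b(z) \sim \beta_0 (z_{\mathrm{max}} - z)^q$ with $q \ge 1$, the exponent governing $s'$ is $\alpha = p - q - 1$. When $\alpha \ge 1$ one has $s'(z) \sim \exp\!\big(\pm A (z_{\mathrm{max}}-z)^{-\alpha}\big)$, and I would use the elementary asymptotic $\int_0^\varepsilon \exp(-A\tau^{-\alpha})\,\mathrm{d}\tau \sim \tfrac{\varepsilon^{\alpha+1}}{A\alpha}\exp(-A\varepsilon^{-\alpha})$ to evaluate $s(z_{\mathrm{max}}) - s(y)$. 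Multiplying by $m(y) \sim \mathrm{const}\,(z_{\mathrm{max}}-y)^{-p}\exp\!\big(A(z_{\mathrm{max}}-y)^{-\alpha}\big)$, the two exponentials cancel and, since $\alpha + 1 - p = -q$, the polynomial prefactors leave an integrand $\sim \mathrm{const}\,(z_{\mathrm{max}}-y)^{-q}$ with $q \ge 1$, which is non-integrable; hence $z_{\mathrm{max}}$ is inaccessible. The remaining subranges ($\alpha \le 0$, where $s'$ is only of power type, and $b \equiv 0$) are easier power counts that again give inaccessibility, confirming the equality case. Assembling the three cases at $z_{\mathrm{max}}$, together with the mirror-image analysis at $z_{\mathrm{min}}$ that produces the reversed inequality $2b(z_{\mathrm{min}}) - a'(z_{\mathrm{min}}) \ge 0$, yields the stated necessary and sufficient condition; non-explosion then upgrades the nested local strong solutions to a unique global one.
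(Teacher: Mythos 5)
Your proposal is correct and follows essentially the same route as the paper: both arguments reduce the statement to Feller's test of explosion and then decide divergence of the Feller integral at each endpoint by extracting the local power/exponential asymptotics of the scale and speed densities from the orders of vanishing of the polynomials $a$ and $b$. Your case split ($p=1$; $p\ge 2$ with $b(z_{\mathrm{max}})\ne 0$; the boundary case with the Laplace-type cancellation) is merely different bookkeeping for the paper's tabulated case analysis in the exponents $A$ and $B$ (with $p = A+1$, $q = B$), so the two proofs coincide in substance.
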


\subsection{A canonical parametrisation of scalar polynomial models}\label{se:canon}
We are now in a position 
to characterise the  range of admissible parameters
for which there exists a bounded  scalar polynomial model.

In light of Theorem \ref{th:main1}, we may assume the degree of
 $b$ is at most three and the degree of $\sigma^2$ is at most four.

By applying an affine transformation to the state variable,   there is no loss
of generality in fixing the state space $I$ to be any finite
interval.  Therefore, to simplify some calculations,
 in this section we will set $I = (-1,1)$ and will refer to this
as the canonical state space in the sequel.  

In order to enforce the condition $\sigma(-1) = 0 = \sigma(1)$
we rewrite $\sigma^2$ as a product of $(1-z^2)$ and polynomial
of degree of at most two.

\begin{proposition}\label{th:no-exp}
Let
\begin{align*}
b(z) &= b_0 + b_1 z + b_2 z^2 + b_3 z^3 \\
\sigma^2(z) &=  (1-z^2) (c_0 + c_1 z + c_2 z^2).
\end{align*}
The  stochastic differential equation
$$
dZ_t = b(Z_t) dt  + \sigma(Z_t) dW_t \ \ Z_0=z
$$
has a non-explosive solution valued in the open interval $(-1,1)$
for every initial condition $-1 < z < 1$ if and only if all of the following conditions hold
\begin{itemize}
\item 
$b_1 + b_3 + c_0 + c_2  \le - | b_0 + b_2 + c_1 |$;
\item  $c_0 > 0$; and
\item either  $   |c_1|- c_0 \le c_2 \le c_0  $ or  $    c_2 > \max\{ c_0,  \tfrac{1}{4} c_1^2 \} $
\end{itemize}
\end{proposition}

We prove this result via two lemmas.

\begin{lemma}
Let $a$ and $b$ be as in Proposition \ref{th:no-exp}.  
 We have 
$$
2b(-1) - a'(-1) \ge 0 \ge 2b(1) - a'(1)
$$
 if and only if
$$
b_1 + b_3 + c_0 + c_2  \le - | b_0 + b_2 + c_1 |.
$$
\end{lemma}

\begin{proof}  We have
$$
2b(z) - a'(z) = 2 [ b_0  + (b_2 + c_1) z^2  ]  + 2 z [ b_1 + c_0 + (b_3+c_2) z^2] - (1-z^2)( c_1 + 2 c_2 z)
$$
from which the conclusion quickly follows.
\end{proof} 

\begin{lemma}  
 We have 
$$ 
c_0 + c_1 z + c_2 z^2 > 0 \mbox{ for all } -1 < z < 1 
$$
 if and only if
\begin{itemize}
\item  $c_0 > 0$; and
\item either  $   |c_1|- c_0 \le c_2 \le c_0  $ or  $    c_2 > \max\{ c_0,  \tfrac{1}{4} c_1^2 \} $
\end{itemize}
\end{lemma}

\begin{proof}
For a fixed triplet $(c_0, c_1, c_2)$ let $c(z) =  c_0 + c_1 z + c_2 z^2$.

To prove necessity, we first suppose that  $c(z) > 0$ for all $-1 < z < 1$.
Note that $c(0) = c_0$ implying  that $c_0 > 0$. Furthermore,
 by continuity, we have $c(\pm 1) \ge 0$   implying that $ c_2 \ge |c_1| - c_0$. 

Now  consider the case where $c_2 >  c_0$. 
Letting
$$
z_0 = \sqrt{\frac{c_0}{ c_2} }  
$$
we have $0 < z_0 < 1$ and 
$$
c(z_0) =  (c_1 +  2 \sqrt{  c_0 c_2})z_0.
$$
Hence the condition $c_1 > - 2 \sqrt{  c_0 c_2}$ is necessary.  By considering
$c(-z_0)$ we see that  the condition  $c_1 <  2 \sqrt{  c_0 c_2}$ is also necessary. 

Now to prove sufficiency, first suppose $c_0 > 0$  and $   |c_1|- c_0 \le c_2 \le c_0$
 Note 
that 
\begin{align*}
c(z) &\ge c_0 - |c_1| |z| + c_2 z^2 \\
& \ge  c_0 - (c_0 + c_2) |z| + c_2 z^2 \\
& = (c_0 - |z|c_2 )(1- |z|)
\end{align*}
and hence $c(z) > 0$ whenever $|z| < 1$. 

Finally, suppose $c_2 > c_0 > 0$ and $|c_1| <  2 \sqrt{  c_0 c_2}.$
Writing
\begin{align*}
c(z) &=  c_0 - \frac{c_1^2}{4 c_2} + c_2 \left( z - \frac{c_1}{2 c_2} \right)^2.
\end{align*}
we have $c(z) > 0$ for all $z$. 
\end{proof}

\begin{proof}[Proof of Proposition \ref{th:no-exp}] The claim follows from the two lemmas and
the version of Feller's test of Theorem \ref{th:feller}.
\end{proof}

\section{A spectral representation}\label{se:spectral}
We are in the setting of the scalar non-degenerate polynomial model
with the factor process taking values in a bounded open interval $I$.   
Recalling the notation $\mathbf{L}$ from Section \ref{se:pde}, we
note that the coefficient functions $g = (g_0, \ldots, g_n)^{\top}$ are the 
solution of the system of differential equations 
$$
\dot g = \mathbf{L} g, \  \ g(0) = (1, 0, \ldots, 0)^\top
$$
or equivalently $g(\tau) = e^{\mathbf{L} \tau} g(0)$.

It turns out that the matrix $\mathbf{L}$ has a nice property:
\begin{proposition}\label{th:realeigs} Let
$$
\mathbf{L}   = b(\mathbf{Z}) \mathbf D + \tfrac{1}{2} a(\mathbf{Z}) \mathbf{D}^2 - R(\mathbf{Z}) .
$$
where $R(z) = R_0 + R_1z + R_2z^2$, $b(z) = b_0 + b_1z + b_2 z + b_3z^2$ and $a(z) = (1-z^2)(c_0 + c_1 z + c_2 z^2)$ and the
coefficients satisfy
\begin{itemize}
\item $ R_2 =  \tfrac{n}{2} b_3 = - \tfrac{n(n-1)}{2} a_4 \mbox{ and } R_1 = n b_2 + \tfrac{n(n-1)}{2} a_3;$
\item 
$ |b_0 + b_2 + c_1 | \le - (b_1 + b_3 + c_0 + c_2)$;
\item  $c_0 > 0$; and
\item either  $   |c_1|- c_0 \le c_2 \le c_0  $ or  $    c_2 > \max\{ c_0,  \tfrac{1}{4} c_1^2 \} $
\end{itemize}
The eigenvalues $\lambda_0, \ldots, \lambda_n$ of $\mathbf{L}$ are real and satisfy 
$$
\lambda_i \le -\inf_{z \in I} R(z).
$$
for all $i$.
\end{proposition}

To prove Proposition \ref{th:realeigs}, we first prove a result
on the existence of an invariant measure which may have
independent interest.

\begin{proposition}\label{th:invmeas}  Suppose
the functions $b,a$ are polynomials, such that 
\begin{itemize}
\item $b(-1) > 0 > b(1)$ and 
\item $a(-1) = 0 = a(1)$ and
\item $a'(-1) > 0 > a'(1)$ and $a(z) > 0$ 
\item $a(z) > 0$ for all $- 1< z < 1$.
\end{itemize}
  Then there
exists a positive, integrable function $f$ satisfying the
differential equation
$$
bf = \frac{1}{2}(a  f)'
$$ 
with boundary conditions
$$
\lim_{z \downarrow -1} a(z) f(z) = 0 = \lim_{z \uparrow  1}a(z) f(z),
$$
\end{proposition}

\begin{remark}
If the function 
$f$ is normalised so that $\int_{-1}^1 f(z) dz = 1$, then $f$
is the unique invariant density for the diffusion $Z$ with drift $b$ and
volatility $\sigma = \sqrt{a}$. That is, if the initial condition
$Z_0$ is distributed with density $f$, then $Z_t$ has the
same distribution for all $t \ge 0$.
\end{remark}

\begin{proof} 
Now any positive solution to the differential equation is of form
$$
f(z) = \frac{C}{a(z)}e^{ \int_0^z \frac{2 b(s)}{a(s)} ds}.
$$
for $|z| < 1$, where $C>0$ is a constant. 

As in the proof of Theorem \ref{th:feller} we focus on the left-hand 
end point $z=-1$.  We must show that such an $f$ is integrable
and $a(t-1) f(t-1) \to 0$ as $t \downarrow 0$.   Writing
\begin{align*}
b(t-1) &= \beta + O(t) \\
a(t-1) & = \alpha t + O(t^2)
\end{align*}
where $\beta, \alpha > 0$, a routine calculation shows
that
$$
f(t-1) = O( t^{ \frac{2 \beta}{\alpha} - 1})
$$
from which the conclusion follows.
\end{proof}

\begin{proof}[Proof of Proposition \ref{th:realeigs}]   
Since $L$ varies continuously with the model parameters,
there is no loss of generality to assume that the parameters satisfy
\begin{itemize}
\item 
$ |b_0 + b_2 + c_1 | < - (b_1 + b_3 + c_0 + c_2)$;
\item  $c_0 > 0$; and
\item either  $   |c_1|- c_0 < c_2 \le c_0  $ or  $    c_2 > \max\{ c_0,  \tfrac{1}{4} c_1^2 \} $
\end{itemize}
  By Proposition \ref{th:invmeas} there exists
an invariant density $f$.
 
Consider the inner product on $\RR^{n+1}$
defined by
\begin{align*}
\langle p, q \rangle  &= \sum_{i=0}^n \sum_{j=0}^n p_i q_j \int_{-1}^1 z^{i+j} f(z) dz \\
&= \int_{-1}^1 \hat p(z) \hat q(z) f(z) dz
\end{align*}
where as before $\hat{}$ is the linear operator
such that
$$
\hat p(z) = \sum_{k=0}^n p_k z^k.
$$
Recall that 
$$
\widehat{\mathbf{L}p} = \frac{1}{2} a \hat p'' + b \hat p' - R \hat p.
$$
By integration by parts we have
\begin{align*}
\langle p, \mathbf{L} q \rangle & =  - \int_{-1}^1   [ \tfrac{1}{2} a \hat p' \hat q'  + R \hat p \hat q] f  \ dz \\
 &= \langle\mathbf{L} p,  q \rangle,
\end{align*}
 where we have used the boundary condition
$$
\lim_{z \downarrow -1} a(z) f(z) = 0 = \lim_{z \uparrow  1}  a(z) f(z).
$$
In particular,  we see that $\mathbf{L}$ is symmetric with respect
to this inner product and hence all eigenvalues are real.
The inequality
\begin{align*}
\langle p, \mathbf{L} p   \rangle & \le   - \int_{-1}^1     R   \hat p^2  f  \ dz \\ 
& \le - \inf_{-1 < z < 1} R(z) \langle p, p \rangle
\end{align*}
implies the claimed upper bound on the spectrum.
\end{proof}

\begin{remark}
Of course, the eigenvalues of the matrix $\mathbf{L}$ are the zeros of the characteristic polynomial
which has  degree $n+1$.  Since there exists formulae for the roots of polynomials up to degree four,
it is possible, at least in principle, to express   explicitly  the bond pricing function $G$ in a scalar polynomial model
in terms of the model parameters when $n \le 3$.

 When $n \ge 4$, there is little hope for explicit formulae for the function  $G$ in
terms of the model parameters.  However, note that the matrix $\mathbf{L}$ is sparse, in the sense that there are
at most five non-zero matrix entries per row.  In particular, the product of the matrix exponential $e^{\mathbf{L}\tau}$
and the vector $(1, 0, \ldots, 0)^\top$ can be computed efficiently, and hence the lack of explicit formulae is not necessarily
a prohibitive disadvantage.
\end{remark}

The proof of Proposition \ref{th:realeigs} shows that when there exists
an invariant density $f$, then 
$$
\mathbf{L}^{\top} M = M \mathbf{L}
$$
where $M = (M_{ij})_{ij}$ is the $(n+1) \times (n+1)$ positive definite
matrix with entries
$$
M_{ij} = \int_I z^{i+j} f(z) dz.
$$ 
  Suppose that the $n+1$  real eigenvalues
of the matrix $\mathbf{L}$ are
$\lambda_0,  \ldots, \lambda_n$.  Then the matrix $\mathbf{L}$ has the spectral 
decomposition
$$
\mathbf{L} = \sum_{i=0}^n \lambda_i \ u_i \ v_i 
$$
where $u_i$ is the right-eigenvector and $v_i$ the left-eigenvector associated to 
the eigenvalue $\lambda_i$, scaled such that
$$
 v_i  u_j = \left\{ \begin{array}{ll} 1 & \mbox{ if } i=j \\ 0 & \mbox{ if } i \ne j.  \end{array} \right.
$$
For convenience, we choose the normalisation
$$ 
u_i^\top M u_i  = 1, 
$$
and note that  the left- and right-eigenvectors are related by
$$
v_i =   u_i^\top M.
$$
Now given the $i$th right-eigenvector $u_i$
we can form the polynomial $\hat u_i(z) = \sum_{k=0}^n u_{i,k} z^k$. Note that $\hat u_i$
is an eigenfunction of the differential operator $\mathcal L$,  
and that the $i$th left-eigenvector $v_i$ is related to $\hat u_i$ by the formula
$$
v_{i,k} =  \int_I z^k \hat u_i(z) f(z) dz 
$$ 

In particular,  the bond pricing function takes
the form
$$
G(\tau,z) = \sum_{i=0}^n Q_i(z) e^{\lambda_i \tau}
$$
where the function $Q_i$ is the (at most) $n$ degree polynomial
$$
Q_i(z) = \hat u_i(z)  \int_I   \hat u_i(s) f(s) ds 
$$
That is to say, the bond price can be seen to be a linear combination of the bond prices
arising from $n+1$ models with constant interest rates $r = -\lambda_i$, where the coefficients $Q_i$
of the combination depend on the factor process.  Note that by setting $\tau=0$ we have
$$
\sum_{i=0}^n Q_i(z) = 1
$$
so it is tempting to think of the numbers $(Q_i(z))_i$ as probabilities; however, in general
$ Q_i(z) < 0$ for some $i$ and $z \in I$, so such an interpretation is not
always valid.

In the general case, where the parameters are such that no invariant density
exists, the matrix $\mathbf{L}$ is not necessarily diagonalisable. 
In this case, 
the bond pricing formula must be modified to 
\begin{equation}\label{eq:weighted}
G(\tau,z) = \sum_{i=0}^n Q_i(\tau,z) e^{\lambda_i x}
\end{equation}
 where now the weight functions $Q_i$  are polynomials in both $x$ and $z$
and can be computed from the Jordan decomposition.   An example
where the matrix $\mathbf{L}$ is not diagonalisable is
discussed in Section \ref{se:unbounded} -- though
strictly speaking, the setting is slightly different there since the state space for that
example is unbounded.

One consequence of   formula  \eqref{eq:weighted} 
is that the long maturity interest rate can calculated as
$$
\lim_{\tau \to \infty} - \frac{1}{\tau} \log G(\tau,z) = - \max_i \lambda_i.
$$
for all $z \in I$,   unless the coefficient $Q_i(\tau,z)$ of the maximum eigenvalue  is identically zero.

\section{An example} \label{se:examples}
In this section we explore a concrete realisation of 
a polynomial model.  The purpose of this account is as a
 proof of concept  and is not intended as an endorsement of this
particular model over others. 
In the general polynomial framework, the function $R: I \to \RR$, mapping
 the factor process to the spot interest rate, is a quadratic function.
In the following example, we assume that $R$ is affine.  By an affine
change of variables, we can and will take the spot rate itself as the factor
process. Note that this choice of parametrisation differs from the canonical
choice introduced in Section \ref{se:canon}.

The following proposition requires no proof in light of Theorem \ref{th:main}, Proposition \ref{th:bounded}
Theorem \ref{th:feller}.

\begin{proposition}  \label{th:example}
Given  real constants $\alpha, \beta, \gamma$, positive constants $\delta, \varepsilon$,
and a positive integer $n$ such that
\begin{align*}
& 2( \alpha \delta + \beta \gamma)  \ge (\delta+ \gamma \varepsilon)^2 \\
& \beta \ge \alpha \varepsilon \\
& n(n-1) \delta \varepsilon^2 = 2.
\end{align*}
For every $\rho$ in the interval $I = (-\gamma/\delta, 1/\varepsilon)$ there exists
a unique (non-explosive) $I$-valued strong solution $(r_t)_{t \ge 0}$ to the stochastic
differential equation
$$
dr_t = (\alpha - \beta r_t) dt + \sqrt{\gamma + \delta r_t} (1- \varepsilon r_t) dW_t, \ \  r_0= \rho
$$
with the property that there exist  differentiable functions $g_0, \ldots, g_n: \RR_+ \to \RR$ such that
$$
\EE[ e^{-\int_0^{\tau} r_s ds} | r_0= \rho] = \sum_{k=0}^n g_k(\tau) \rho^k 
$$
for all $\rho \in I$ and $\tau \ge 0$. Furthermore, the function $g=(g_0, \ldots, g_n)^\top$
is the unique solution to the linear ordinary differential equation
$$
\dot{g} = \mathbf{L} g, \ \ g(0) = (1, 0, \ldots, 0)^\top  
$$
 where the $(n+1)\times (n+1)$ matrix $\mathbf{L}$
is given by
$$
\mathbf{L} = - \mathbf{Z} +   (\alpha \mathbf{Z} - \beta \mathbf{I}) \mathbf{D} + \frac{1}{2}
(\gamma \mathbf{I} + (\delta - 2 \gamma \varepsilon)\mathbf{Z} +
(\gamma \varepsilon^2 - 2 \varepsilon \delta ) \mathbf{Z}^2 + \delta \gamma \mathbf{Z}^3) \mathbf{D}^2
$$
\end{proposition}

The above proposition could be compared to the following proposition on exponential
affine term structure models:

\begin{proposition}  \label{th:affine}
Given  real constants $\alpha, \beta, \gamma$ and a non-negative constant $\delta$, such that
\begin{align*}
& 2( \alpha \delta + \beta \gamma)  \ge \delta^2.
\end{align*}
For every $\rho$ in the interval $I$ defined by
$$
I = \left\{ \begin{array}{ll} (-\gamma/\delta, + \infty) & \mbox{ if } \delta > 0 \\
(-\infty, + \infty) & \mbox{ if } \delta = 0 \end{array} \right.
$$
 there exists
a unique (non-explosive) $I$-valued strong solution $(r_t)_{t \ge 0}$ to the stochastic
differential equation
$$
dr_t = (\alpha - \beta r_t) dt + \sqrt{\gamma + \delta r_t}  dW_t, \ \  r_0= \rho
$$
with the property that there exist  differentiable functions $h_0, h_1: \RR_+ \to \RR$ such that
$$
\EE[ e^{-\int_0^{\tau} r_s ds} | r_0= \rho] = e^{ h_0(\tau) + h_1(\tau) \rho}  
$$
for all $\rho \in I$ and $\tau \ge 0$. Furthermore, the functions $h_0, h_1$ satisfy the coupled
system of Ricatti equations:
$$
\begin{array}{lr}
\dot h_1   = - 1 - \beta h_1 + \frac{1}{2} \delta h_1^2,  & \quad  h_1(0) = 0 \\
\dot h_0   =  \alpha h_1 + \frac{1}{2} \gamma h_1^2, & \quad  h_0(0) = 0
\end{array}
$$
\end{proposition}
 
\begin{remark}
Recall that the Vasicek interest rate model is recovered from the more general affine
model of the above proposition by setting $\delta = 0$.  Similarly, the Cox--Ingersoll--Ross model
corresponds to $\gamma = 0$.
\end{remark}

\begin{remark}
Comparing propositions \ref{th:example} and \ref{th:affine} we see that
the dynamics of a certain class of exponential affine processes popularly 
used in interest rate modelling can be recovered  from 
a certain class of polynomial models by formally setting $\varepsilon = 0$ and
$n=\infty$.
\end{remark}

See the recent thesis \cite{si} of Cheng for calibrated examples of scalar 
polynomial term structure models from the class of examples exhibited in Proposition \ref{th:example}.

\section{Hull--White-type extension} \label{se:extensions}
In this section we consider a Hull--White type extension of the polynomial modelling framework.
As usual, by incorporating time-dependent parameters, we can hope to have a better model calibration. We introduce time dependency both in the dynamics of the factor process $(Z_t)_{t \geq 0}$ and the coefficient functions $(g_k)_k$. 
We first  establish an algebraic result similar to Theorems \ref{th:main} and \ref{th:main1} in this case.
We will then show
that the Brody--Hughston rational   model can be seen as an instance of this framework when the
degree is $n=1$.

\begin{theorem}\label{th:HW}
Let $\Delta = \{ (t,T): 0 \le t \le T \}$ and $I \subseteq \RR^d$ be a non-empty open set.  Suppose the
functions $R:\RR_+ \times I \to \RR$, $H: \Delta \times I \to \RR$, $b:\RR_+ \times I \to \RR^d$ and
$a: \RR_+ \times I \to \RR^{d \times d}$ are such that $G(t, T, \cdot)$ is twice-continuously 
differentiable for all $(t,T) \in \Delta$ and $G(\cdot, T, z)$ is continuously differentiable for all $(T,z) \in \RR_+ \times I$
and satisfies the partial differential equation
$$
\partial_t G +\sum_{1 \le i \le d} b_i \partial_{z_i} G + \frac{1}{2}\sum_{1 \le i,j \le d} a_{ij}  \partial_{z_i z_j} G = RG
\mbox{ on } \Delta \times I
$$
with boundary conditions
$$
G(T,T,z) = 1 \mbox{ for all } (T,z) \in \RR_+ \times I.
$$
Furthermore, suppose that there exists an interger $n$ and  functions $g_k:\Delta \to \RR$ such that $g(\cdot, T)$
is differentiable for all $T \ge 0$ and 
$$
G(t,T, z) = \sum_{k_1 + \ldots + k_d \le n} g_k(t, T) z^k
$$
and that the functions $(g_k)_k$ are linearly independent.

Then
\noindent Case $n=1$.  For all $t> 0$, the function $R(t, \cdot)$ is a polynomial of degree at most one, for each $i$ the function
$b_i(t, \cdot)$ is a polynomial
of degree at most two, and $a(t, \cdot)$ is unrestricted.

\noindent Case $n\ge 2$. For all $t \ge 0$, the function $R(t, \cdot)$ is a polynomial of degree at most two, for each $i$ the function
$b_i(t, \cdot)$ is a polynomial of degree at most three, and for each $i,j$ the function 
$a_{ij}(t, \cdot)$ is a polynomial of degree at most four. 

Additionally, in the case where $d=1$, if   $R(t,z) = R_0(t) + R_1(t) z+ R_2(t) z^2$, $b(t,z) = b_0(t) + b_1(t) z + b_2(t)z^2+ b_3(t) z^3$ and
$a(z) = a_0(t)+ a_1(t) z + a_2(t) z^2 + a_3(t) z^3 + a_4(t) z^4$,
then the coefficients are such that
$$
 R_2(t) =  \tfrac{n}{2} b_3(t) = - \tfrac{n(n-1)}{2} a_4(t) \mbox{ and } R_1(t) = n b_2(t) + \tfrac{n(n-1)}{2} a_3(t) 
$$
and  $(g_0, \ldots, g_n)$ solves the system of linear ordinary differential equations
\begin{align*}
-\partial_t g_k = & g_{k-2} \left( (k-2) b_3    + \frac{(k-2)(k-3)}{2} a_4 -R_2 \right) \\
&  + g_{k-1} \left( (k-1) b_2 + \frac{(k-1)(k-2)}{2} a_3 - R_1 \right) 
 + g_k \left( kb_1 + \frac{k(k-1)}{2} a_2 - R_0 \right) \\
& + g_{k+1} \left( (k+1)b_0 + \frac{k(k+1)}{2} a_1 \right) +  g_{k+2} \frac{(k+2)(k+1)}{2} a_0 \mbox{ on } [0,T]  \\
g_k(T,T) = & \left\{ \begin{array}{ll} 1 & \mbox{ if } k = 0 \\ 0 & \mbox{ if } k \ge 1, \end{array} \right.
\end{align*} 
  where we interpret $g_{-2} = g_{-1} = g_{n+1} = g_{n+2} = 0$.
\end{theorem}

The proof is essentially the same as that of Theorems \ref{th:main} and \ref{th:main1}, so is omitted.

\subsection{Brody--Hughston rational   model}
In the paper \cite{BH} of Brody \& Hughston, the following rational   model is discussed.  Let $M$ 
be a positive martingale under the \textit{objective measure} $\PP$, and suppose $M_0 = 1$.
Set
$$
V_t = \alpha(t) + \beta(t) M_t
$$
where $\alpha$ and $\beta$ are positive, continuously differentiable, deterministic functions. 
 The idea is that $V$ is a model for the state price density.  Therefore, bond
prices are given by the formula
\begin{align*}
P_{t,T} &= \frac{1}{V_t} \EE^{\PP}(V_T | \ff_t) \\
 &= \frac{\alpha(T) + \beta(T) M_t}{\alpha(t) + \beta(t) M_t}.
\end{align*}
Note that the bond prices are a rational function of the   random variable $M_t$, 
giving the model its name.  Furthermore,
by setting $\alpha(t) + \beta(t) = P_0(t)$ for $t \ge 0$, this model can match the initial term structure of interest rates.

On the other hand, notice that we can write the bond prices as
$$
P_{t,T} = g_0(t,T) + g_1(t,T) Z_t
$$
where the coefficients are defined by
$$
g_0(t,T) = \frac{\beta(T)}{\beta(t)} \mbox{ and } g_1(t,T)  = 
\frac{ \alpha(T) \beta(t)-  \beta(T) \alpha(t))}{\beta(t) }  
$$
and where we let
$$
Z_t = \frac{1}{V_t}
$$
be the factor process.  In particular, this is an affine factor model and hence should
be described by Theorem \ref{th:HW}.  We now carry out the verification
under the assumption that 
$$
dM_t = \nu(t, M_t) M_t dB_t
$$
where $B$ is a $\PP$-Brownian motion and $\nu$ is bounded.

In this framework, we can   define the spot rate as
\begin{align*}
r_t & = - \partial_T P_{t,T}|_{T=t} \\
&=  - \frac{\dot \alpha(t) + \dot \beta(t) M_t}{\alpha(t) + \beta(t) M_t} \\
& = R_0(t) + R_1(t) Z_t
\end{align*}
where
$$
R_0(t) = -\frac{\dot \beta(t)}{\beta(t)}  \mbox{ and }
 R_1(t)  =   \frac{\dot\beta(t) \alpha(t) - \dot \alpha(t) \beta(t)   }{\beta(t) }.
$$
Note that
\begin{align*}
dV_t &= \left( \dot \alpha(t) + \dot \beta(t) M_t \right) dt + \beta(t) dM_t \\
& = - V_t( r_t dt + \lambda_t dB_t)
\end{align*}
where $\lambda_t$ is the market price of risk defined by
\begin{align*}
\lambda_t &=  - \frac{ \beta(t) \nu(t, M_t) M_t }{\alpha(t) + \beta(t) M_t} \\
& =  \nu(t,M_t)(\alpha(t) Z_t - 1)
\end{align*}
Since the process $(\lambda_t)_{0 \le t \le T}$ is bounded, we can define the equivalent risk-neutral pricing measure $\QQ$ by
\begin{align*}
\frac{d \QQ}{d \PP} & =   e^{ \int_t^T r_s ds}  V_t  \\
 & = e^{- \frac{1}{2}\int_0^T \lambda_t dt + \int_0^T \lambda_t dB_t}
\end{align*}
to recover the usual pricing formula
$$
P_{t,T} = \EE^{\QQ}[ e^{- \int_t^T r_s ds} | \ff_t].
$$

Finally, we consider the dynamics of the factor process $Z=V^{-1}$. By It\^o's formula we have
\begin{align*}
dZ_t & = Z_t [ (r_t + \lambda_t^2) dt + \lambda_t dB_t ] \\
& =  (b_1(t) Z_t + b_2(t) Z_t^2) dt + \sigma(t, Z_t) dW_t
\end{align*}
where
\begin{align*}
b_1(t) &=  -\frac{\dot \beta(t)}{\beta(t)} \\
 b_2(t)  &  =   \frac{\dot\beta(t) \alpha(t) - \dot \alpha(t) \beta(t)   }{\beta(t) } \\
\sigma(t,z) & = \nu\left(t, \frac{1- z \alpha(t)}{z \beta(t)} \right)( \alpha(t) z - 1)z
\end{align*}
and where the process $(W_t)_{0 \le t \le T}$ defined by
$$
W_t = B_t + \int_0^t \lambda_s ds
$$
is a $\QQ$ Brownian motion    by Girsanov's theorem. 
In particular, notice that the drift is quadratic in $Z$ and $b_2(t) = R_1(t)$
as predicted by Theorem \ref{th:HW}, while the volatility
is determined by the function $\nu$.

\section{Appendix: Proof of the special Feller test}
Recall that Feller's test is
$$
\PP(T = \infty) = 1 \Leftrightarrow  v(z_{\mathrm{min}}) = \infty = v(z_{\mathrm{max}}),
$$
where   Feller's test function is defined by
	$$
v(x)  = 
\int_{z=c}^x \int_{y=z}^x \frac{1}{a(z)} e^{\int_y^z \frac{2b(w)}{a(w)}dw} dy \  dz, \mbox{ for }  z_{\mathrm{min}} 
< x < z_{\mathrm{max}},
$$
where $c = \frac{1}{2}( z_{\mathrm{min}} +  z_{\mathrm{max}})$. 
See for instance Chapter 5 of  Karatzas and Shreve's \cite{KS} book.

It is enough to consider the behaviour of $v$ near $x =  z_{\mathrm{min}}$, as the behaviour
near $x=z_{\mathrm{max}}$ is analogous.

By changing variables, we now study the cases where the integral
$$
v(z_{\mathrm{min}}) = \int_{ z_{\mathrm{min}}}^c \frac{p(z)}{a(z) p'(z)}  dz
$$
is finite or infinite,
where
$$
p(z) =  \int_{z_{\mathrm{min}}}^z e^{\int_y^c \frac{2b(u)}{a(u)}du} dy.
$$
is the related to the scale function.
Now, by assumption the functions $a$ and $b$ are polynomials, and hence
near $z_{\mathrm{min}}$ can be written as
\begin{align*}
a(t+z_{\mathrm{min}}) &= \alpha t^{A+1} + O(t^{A+2} ) \\
b(t+ z_{\mathrm{min}}) & = \beta t^{B} + O(t^{B+1})
\end{align*}
for constants $\alpha> 0$ and  $\beta \ne 0$ and for integers $A, B \ge 0$.
Note that with this notation
$$
2b(z_{\mathrm{min}}) - a'(z_{\mathrm{min}}) = \beta \1_{\{ B=0 \}} - \alpha \1_{\{A = 0\} }.
$$
Hence, we must show that $v(z_{\mathrm{min}}) = \infty$ on
$$
\{ A > 0, B = 0, \beta > 0 \} \cup \{ A > 0, B > 0 \} \cup \{A =0, B=0, 2\beta \ge \alpha \}
$$
and that $v(0) < \infty$ on the complement
$$
\{  A > 0, B= 0,  \beta < 0 \} \cup \{ A = 0, B > 0 \} \cup \{A =0, B=0, 2\beta < \alpha \}.
$$

We have the calculation
$$
 \int_{t+z_{\mathrm{min}}}^c \frac{2b(s)}{a(s)}ds  =\left\{ \begin{array}{ll}  \mathrm{const} + O(t) & \mbox{ if } B \ge A+1 \\ 
- \frac{2\beta}{  \alpha} \log t + \mathrm{const} + O(t) & \mbox{ if } B = A \\ 
   \frac{2 \beta}{  \alpha (A-B) } t^{-(A-B) } + O( t^{1-A+B } ) & \mbox{ if } B \le  A -1 \\ 
\end{array}  \right.
$$
and hence
$$
 p'(t + z_{\mathrm{min}})  =\left\{ \begin{array}{ll}  \mathrm{const}(1 + O(t)) & \mbox{ if } B \ge A+1 \\ 
t^{- 2 \beta/\alpha}( \mathrm{const}  + O(t) ) & \mbox{ if } B = A \\ 
   e^{ \frac{2 \beta}{  \alpha (A-B) } t^{-(A-B) }} (1 + O(t) )  & \mbox{ if } B \le  A -1 \\ 
\end{array}  \right.
$$
and therefore
$$
 \frac{p(t + z_{\mathrm{min}})}{a(t + z_{\mathrm{min}}) p'(t + z_{\mathrm{min}})}
  =\left\{ \begin{array}{ll}  \frac{1}{\alpha}t^{-A}(1 + O(t)) & \mbox{ if } B \ge A+1 \\ 
\infty & \mbox{ if } B = A, 2 \beta \ge \alpha \\ 
\frac{1}{2 \beta} t^{-A}(1  + O(t) ) & \mbox{ if } B = A, 2 \beta < \alpha \\ 
\infty & \mbox{ if } B \le A - 1,   \beta > 0 \\ 
  \frac{1}{2 \beta}t^{-B}  (1 + O(t) )  & \mbox{ if } B \le  A -1, \beta < 0.\\ 
\end{array}  \right.
$$
From this, we see that $v(z_{\mathrm{min}}) = \infty$ precisely on 
\begin{align*}
\{ B \ge A+1, A \ge 1 \} &\cup \{ B = A, 2 \beta \ge \alpha \} \cup 
 \{ B = A \ge 1, 2 \beta < \alpha \} \\
& \cup \{ A \ge B+1, \beta > 0 \} \cup
\{ A \ge B+1 \ge 2, \beta < 0 \}
\end{align*}
from which the conclusion follows.

\section{Acknowledgement}
The SC acknowledges the financial support of the Man Group studentship and MRT that of the Cambridge Endowment for Research in Finance.
This work has been presented at the Labex Louis Bachelier - SIAM-SMAI Conference on Financial Mathematics in Paris,
the Conference on Stochastic Calculus, Martingales and Financial Modeling in St Petersburg,
the Advanced Methods in Mathematical Finance in  Angers, and the Eleventh Cambridge--Princeton Conference
in Princeton.  We would like to thank the  participants for useful comments and suggestions.
Finally, we would like to thank Dorje Brody and Lane Hughston for a discussion of this work
and its relation to their paper \cite{BH}.

\end{document}